\newtheorem{theorem}{Theorem}[section]
\newtheorem{lemma}[theorem]{Lemma}
\newtheorem{proposition}[theorem]{Proposition}
\newtheorem{remark}{Remark}[section]
\newtheorem{problem}{Problem}
\DeclarePairedDelimiter\rbra{\lparen}{\rparen}
\DeclarePairedDelimiter\sbra{\lbrack}{\rbrack}
\DeclarePairedDelimiter\cbra{\{}{\}}
\DeclarePairedDelimiter\abs{\lvert}{\rvert}
\DeclarePairedDelimiter\Abs{\lVert}{\rVert}
\DeclarePairedDelimiter\ceil{\lceil}{\rceil}
\DeclarePairedDelimiter\ket{\lvert}{\rangle}
\newcommand{\substr}[2] {\sbra*{#1 .. #2}}
\newcommand{\set}[2] {\left\{\, #1 \colon #2 \,\right\}}
\newcommand{\poly} {\operatorname{poly}}
\begin{document}

\title{A Note on Quantum Divide and Conquer \\ for Minimal String Rotation\footnote{An extended abstract of an earlier version of this paper was presented at the 47th Quantum Information Technology Symposium (QIT47) in Yokohama, Japan, in 2022.}}
    \author{
        Qisheng Wang \footnote{Qisheng Wang is with the School of Informatics, University of Edinburgh, EH8 9AB Edinburgh, United Kingdom (e-mail: \href{mailto:QishengWang1994@gmail.com}{\nolinkurl{QishengWang1994@gmail.com}}). Part of the work was done when the author was with the Graduate School of Mathematics, Nagoya University, Nagoya 464-8602, Japan.}
    }
    \date{}
    \maketitle

\begin{abstract}
    Lexicographically minimal string rotation is a fundamental problem in string processing that has recently garnered significant attention in quantum computing. Near-optimal quantum algorithms have been proposed for solving this problem, utilizing a divide-and-conquer structure. In this note, we show that its quantum query complexity is $\sqrt{n} \cdot 2^{O\rbra*{\sqrt{\log n}}}$, improving the prior result of $\sqrt{n} \cdot 2^{\rbra*{\log n}^{1/2+\varepsilon}}$ due to \hyperlink{cite.AJ22}{Akmal and Jin (SODA 2022)}.
    Notably, this improvement is quasi-polylogarithmic, which is achieved by only logarithmic level-wise optimization using fault-tolerant quantum minimum finding.
\end{abstract}

\textbf{Keywords: quantum computing, quantum algorithms, string processing, quantum query complexity, minimal string rotation.}

\newpage

\tableofcontents

\newpage

\section{Introduction}

Lexicographically minimal string rotation (LMSR) is a fundamental problem on string processing (cf. \cite{Jeu93,CR94,Gus97,CHL07}) that has lots of applications in different areas, e.g., combinatorial mathematics \cite{LWW01,JWW16}, computational geometry \cite{IS89,Mae91}, graph theory \cite{CB81,GY03}, automata theory \cite{AZ08,Pup10}, and computational chemistry \cite{Shi79}. Since Booth \cite{Boo80} proposed the first linear-time algorithm for LMSR based on the KMP string matching algorithm \cite{KMP77}, a series of studies have been carried out successively on its sequential \cite{Shi81,Duv83,Cro92,IS94,Ryt03,BCN05} and parallel \cite{AIP87,IS92} algorithms.

Recently, Wang and Ying \cite{WY23} achieved a quantum speedup on LMSR based on an observation, called ``exclusion rule'' or ``Ricochet Property'' in the literature, borrowed from parallel string processing algorithms \cite{Vis91,IS92}.
Immediately after, Akmal and Jin \cite{AJ22} improved the exclusion rule given by \cite{WY23} and developed a quantum algorithm with a divide-and-conquer structure. The (worst-case) query complexity $\sqrt{n} \cdot 2^{\rbra*{\log n}^{1/2+\varepsilon}}$ of their quantum algorithm is near-optimal up to $n^{o\rbra*{1}}$ factors,\footnote{Although it was explicitly stated in \cite{AJ22} that the quantum query complexity of LMSR is $\sqrt{n} \cdot 2^{O\rbra*{\rbra*{\log n}^{2/3}}}$, a slightly more detailed analysis will show that it is $\sqrt{n} \cdot 2^{\rbra*{\log n}^{1/2+\varepsilon}}$ for any constant $\varepsilon > 0$.} improving the result of $O\rbra*{n^{3/4}}$ by \cite{WY23}.
Inspired by the structures of quantum algorithms in \cite{AGS19} and \cite{AJ22}, Childs, Kothari, Kovacs-Deak, Sundaram, and Wang \cite{CKKD+22} further investigated a general framework of quantum divide and conquer. Moreover, they applied their framework to various string problems and obtained near-optimal quantum algorithms for them, including recognizing regular languages, $k$-increasing subsequence, $k$-common subsequence, and LMSR.
Especially, they considered the decision version of LMSR, and showed that its quantum query complexity is $\widetilde O\rbra*{\sqrt{n}}$.\footnote{$\widetilde O\rbra*{\cdot}$ suppreses polylogarithmic factors.}
After the work of \cite{CKKD+22}, Allcock, Bao, Belovs, Lee, and Santha \cite{ABB+23} further studied the time efficiency of quantum divide and conquer.
However, the results of \cite{CKKD+22} and \cite{ABB+23} do not imply a better quantum query (and time) complexity of the function version of LMSR.

In this note, we further improve the quantum query complexity of (the function version of) LMSR from $\sqrt{n} \cdot 2^{\rbra*{\log n}^{1/2+\varepsilon}}$ by \cite{AJ22} to $\sqrt{n} \cdot 2^{O\rbra*{\sqrt{\log n}}}$.
Our approach is to improve the complexity in each level of the quantum divide-and-conquer algorithm given in \cite{AJ22}, using the fault-tolerant quantum minimum finding in \cite{WY23}.
Even though the logarithmic improvements are small in each level, they can significantly add up to quasi-polylogarithmic improvements in the overall complexity.

\subsection{Main result}

Suppose $s$ is a non-empty ($0$-indexed) string of length $n$. For every $i \in \sbra*{n}$,\footnote{We use the notation $\sbra*{n} = \cbra*{0, 1, 2, \dots, n - 1}$ for every positive integer $n$.} let
\[
s^{\rbra*{i}} = s\substr{i}{n-1} s\substr{0}{i-1}
\]
denote the rotation of $s$ with start position at index $i$.
(The function version of) LMSR is formally defined as follows.

\begin{problem} [Function problem of LMSR] \label{prob:function}
    Given a string $s$ of length $n$, find the smallest index $k \in \sbra*{n}$ such that $s^{(k)} \leq s^{(i)}$ for all $i \in \sbra*{n}$.
\end{problem}

Inspired by the quantum algorithms given in \cite{WY23} and \cite{AJ22}, we obtain better quantum query complexity of LMSR by a more refined design of divide and conquer.

\begin{theorem} [\cref{thm:function} restated] \label{thm:main}
    The worst-case quantum query complexity of LMSR is $\sqrt{n} \cdot 2^{O\rbra*{\sqrt{\log n}}}$.
\end{theorem}

For the function problem of LMSR, our quantum algorithm achieves a quasi-polylogarithmic improvement over the result of $\sqrt{n} \cdot 2^{\rbra*{\log n}^{1/2+\varepsilon}}$ by \cite{AJ22}.
For comparison, we collect all known (both classical and quantum) query complexities of LMSR in Table \ref{tab:overview}.

\begin{table}[!htp]
\centering
\begin{threeparttable}
\caption{Query Complexity of LMSR.}
\label{tab:overview}
\begin{tabular}{ccccc}
\toprule
\begin{tabular}[c]{@{}c@{}}Complexity \\ Type \end{tabular} & \begin{tabular}[c]{@{}c@{}}Algorithm \\ Type \end{tabular} & \begin{tabular}[c]{@{}c@{}}Problem \\ Type \end{tabular} & \begin{tabular}[c]{@{}c@{}}Query \\ Complexity \end{tabular}                                               & References    \\ \midrule
Worst-Case      & Classical      & Function        & $\Theta\rbra*{n}$                                              & \cite{Boo80,Shi81,Duv83}             \\ \cmidrule{2-5}
                & Parallel       & Function        & $\Theta\rbra*{\log n}$ ${}^\dag$                                        & \cite{AIP87,IS92}             \\ \cmidrule{2-5}
                & Quantum        & Function        & $O\rbra*{n^{3/4}}$, $\Omega\rbra*{\sqrt{n}}$                   & \cite{WY23}    \\ \cmidrule{4-5}
                &                &         & $\sqrt{n} \cdot 2^{\rbra*{\log n}^{1/2+\varepsilon}}$ & \cite{AJ22}    \\ \cmidrule{4-5}
                &                &         & $\sqrt{n} \cdot 2^{O\rbra*{\sqrt{\log n}}}$ & This Work      \\ \cmidrule{3-5}
                &                & Decision        & $\widetilde O\rbra*{\sqrt{n}}$                                   & \cite{CKKD+22} \\ \midrule
Average-Case    & Classical      & Function        & $O\rbra*{n}$, $\Omega\rbra*{n/\log n}$                         & \cite{IS94,BCN05,WY23}   \\ \cmidrule{2-5}
                & Quantum        & Function        & $O\rbra*{\sqrt{n}\log n}$, $\Omega\rbra*{\sqrt{n/\log n}}$     & \cite{WY23}    \\ \bottomrule
\end{tabular}
\begin{tablenotes}
      \small
      \item ${}^\dag$ It is the time (depth) complexity of the parallel algorithm.
    \end{tablenotes}
\end{threeparttable}
\end{table}

Theorem \ref{thm:main} also implies that the quantum query complexity of minimal suffix, maximal suffix, and maximal string rotation are also improved to the same complexity as in Theorem \ref{thm:main} because they can be easily reduced to each other as discussed in \cite{AJ22}.

\subsection{Techniques}

In the quantum divide-and-conquer algorithm for LMSR in \cite{AJ22}, they reduce the original LMSR problem of size $n$ to $b$ small LMSR problems of size $n/b$ (with $b$ a parameter to be determined that depends on $n$) with an exclusion rule (see \cref{thm:exclusion-rule}) improving the one in \cite{WY23}.
Specifically, they solve a more general the problem called minimal length-$\ell$ substrings, which is to find the (leftmost, in case of a tie) lexicographically minimal substring of length $\ell$ (see \cref{prob:length-l} for the formal definition).
It should be noted that the LMSR of a string $s$ is actually the minimal length-$n$ substring of $ss$ (the concatenation of two $s$'s).
Their approach for the problem of minimal length-$\ell$ substrings is as follows.
\begin{enumerate}
    \item For finding the minimal length-$\ell$ substring of a string $s$ of length $n$, let $m = \ceil{\ell/b} = O\rbra{n/b}$, then the problem can be solved from $O\rbra{n/m} = O\rbra{b}$ sub-problems, each of size $2m$ (for simplicity, we assume that $n/2 \leq \ell \leq n$ and $n$ is divisible by $m$):
    \begin{itemize}
        \item The $i$-th sub-problem ($i \geq 0$) is to find the (leftmost and rightmost) minimal length-$m$ substring of $s\substr{im}{\rbra{i+2}m-1}$ of length $2m$. 
        Let $x_i$ and $y_i$ be the indices of the leftmost and rightmost minimal length-$m$ substrings. 
    \end{itemize}
    \item According to the exclusion rule (see \cref{thm:exclusion-rule}), an occurrence of the minimal length-$\ell$ substring should start at an index among $x_i$'s and $y_i$'s over all $i \leq O\rbra{b}$.
    There are $O\rbra{b}$ candidates in total, and each two candidates can be compared with quantum query complexity $O\rbra{\sqrt{\ell}}$ by Grover search \cite{Gro96}. 
    The (lexicographically) minimal candidate will be an occurrence of the minimal length-$\ell$ substring, which can be found by quantum minimum finding \cite{DH96}.
    \item Finally, since an occurrence of the minimal length-$\ell$ substring is found, its leftmost occurrence can be found by the quantum string matching algorithm \cite{RV03} with quantum query complexity $\widetilde O\rbra{\sqrt{n}}$. 
\end{enumerate}
Now we analyze the quantum computational complexity of the above algorithm. 
Let $T\rbra{n}$ be the quantum query complexity for finding the minimal length-$n$ substring of a string of length $n$.
Then, each sub-problem can be solved with quantum query complexity $T\rbra{2m}$. 
Note that the answer can be found by quantum minimum finding \cite{DH96} among $O\rbra{b}$ candidates and the comparator between any two candidates can be implemented with success probability at least $2/3$ with quantum query complexity $T\rbra{2m} + O\rbra{\sqrt{\ell}}$. 
Finally, it turns out that the quantum query complexity for LMSR satisfies the recurrence relation:
\begin{align*}
    T\rbra{n} \leq O\rbra*{\sqrt{b}} \cdot \rbra*{T\rbra{2m} + O\rbra*{\sqrt{\ell}}} \cdot O\rbra{\log b} + \widetilde O\rbra{\sqrt{n}},
\end{align*}
where the factor $O\rbra{\log b}$ comes from majority voting to ensure that the success probability of the comparator is at least $1 - 1/\poly\rbra{b}$.
This recurrence relation simplifies to
\[
    T\rbra*{n} \leq \widetilde O\rbra*{\sqrt{b}} \cdot \rbra*{ T\rbra*{\frac{n}{b}} + O\rbra*{\sqrt{n}} } + \widetilde O\rbra*{\sqrt{n}},
\]
which gives the quantum query complexity
\[
T\rbra*{n} = \sqrt{n} \cdot 2^{\rbra*{\log n}^{1/2+\varepsilon}}
\]
for any constant $\varepsilon > 0$.
In the approach of \cite{AJ22}, they did not seek to optimize the polylogarithmic factors in each level of the divide-and-conquer process.
However, for the function version of LMSR, optimizing polylogarithmic factors will make a significant difference shown as follows.

In our approach, we try to optimize polylogarithmic factors in each level of the divide and conquer by appropriately using the quantum minimum finding algorithm on bounded-error oracles \cite{WY23} (which extends the result of \cite{HMdW03} on quantum search). As a result, we remove the $O\rbra{\log b}$ factor from the previous recurrence relation and thus obtain a new recurrence relation of the form
\[
    T\rbra*{n} \leq O\rbra*{\sqrt{b}} \cdot \rbra*{ T\rbra*{\frac{n}{b}} + O\rbra*{\sqrt{n}} } + \widetilde O\rbra*{\sqrt{n}},
\]
which, surprisingly, gives a better quantum query complexity
\[
T\rbra*{n} = \sqrt{n} \cdot 2^{O\rbra*{\sqrt{\log n}}}.
\]
This is a quasi-polylogarithmic improvement over the quantum query complexity $\sqrt{n} \cdot 2^{\rbra*{\log n}^{1/2+\varepsilon}}$ given in \cite{AJ22}.

\subsection{Applications}

As an application, we provide an efficient quantum algorithm for benzenoid identification.
This is a natural problem in organic chemistry, with a linear-time classical algorithm proposed in \cite{Bas16} using the boundary-edges code (BEC) \cite{HLZ96}. 
Although quantum speedups for benzenoid identification have been considered in \cite{WY23}, a better quantum query complexity can be obtained by the quantum algorithm proposed in this paper. 

Formally, a benzenoid can be described by a BEC, i.e., a finite string over $\cbra{1, 2, 3, 4, 5, 6}$. 
For any BEC $s$ of a benzenoid, the canonical BEC of a benzenoid is the lexicographically maximal string among all rotations of $s$ and $s^{\mathrm{R}}$ (the reverse of $s$), with the lexicographical order $1 < 2 < 3 < 4 < 5 < 6$. 
Our quantum algorithm in \cref{thm:main} can be used to find the canonical BEC of a benzenoid of size $n$ with quantum query complexity $\sqrt{n} \cdot 2^{O\rbra{\sqrt{\log n}}}$.
This is done as follows, with the same idea as in \cite{WY23}. 
\begin{enumerate}
    \item Let $i$ and $i^{\mathrm{R}}$ be the LMSR indices of $s$ and $s^{\mathrm{R}}$ (as defined in \cref{prob:function}), respectively, where the LMSR is defined with the lexicographical order $6 < 5 < 4 < 3 < 2 < 1$.
    This can be done with quantum query complexity $\sqrt{n} \cdot 2^{O\rbra{\sqrt{\log n}}}$ by \cref{thm:main}.
    \item The canonical BEC is either $s^{\rbra{i}}$ or $\rbra{s^{\mathrm{R}}}^{\rbra{i^{\mathrm{R}}}}$, which can be determined by the Grover search \cite{Gro96} with quantum query complexity $O\rbra{\sqrt{n}}$. 
\end{enumerate}
Finally, it can be seen that the canonical BEC of a benzenoid of size $n$ can be determined with quantum query complexity $\sqrt{n} \cdot 2^{O\rbra{\sqrt{\log n}}} + O\rbra{\sqrt{n}} = \sqrt{n} \cdot 2^{O\rbra{\sqrt{\log n}}}$, which improves the prior quantum approaches with quantum query complexity $O\rbra{n^{3/4}}$ shown in \cite{WY23} and $\sqrt{n}\cdot 2^{\rbra{\log n}^{1+\varepsilon}}$ implied in \cite{AJ22}. 

\subsection{Discussion} \label{sec:discussion}

In this note, we obtained a better quantum query complexity for (the function problem of) LMSR.
We hope it would bring new inspirations to discover and improve more quantum divide-and-conquer algorithms, especially for quantum algorithms for string processing \cite{CILG+12,AM14,Mon17,BEG+21,LGS22}.

The observation of this paper leads to an improved recurrence relation for quantum divide and conquer, removing the polylogarithmic factors in certain cases (see \cref{sec:qdc} for illustration).
We summarize the main point of our results below.
\begin{itemize}
    \item \textit{A baby step is a giant step}: Small factors are considerable in (quantum) divide and conquer. Sometimes an improvement in each level of the divide and conquer will yield a significant speedup, even beyond polynomial (with respect to the improvement made in each level). Our quantum algorithm for the function problem of LMSR makes a polylogarithmic improvement in each level of the divide and conquer, and finally yields a quasi-polynomial speedup with respect to the improvement, resulting in a quasi-polylogarithmic speedup with respect to the size of the problem, in the overall query complexity. An interesting question is: can we obtain polynomial and even exponential speedups (with respect to the size of the problem) by quantum divide and conquer in certain problems?
\end{itemize}

As can be seen in our result of quantum query complexity, there is still a quasi-polylogarithmic gap between the function and decision problems of LMSR. The reason is that the divide-and-conquer tool given in \cite{CKKD+22} does not apply to non-Boolean cases (see \cref{remark:non-Boolean} for more discussions). If we could obtain a similar identity for adversary composition in the general case, then we might improve the quantum query complexity of the function problem of LMSR and other problems that can be solved by quantum divide and conquer.

\subsection{Organization of this paper}

In the rest of this note, we first introduce the necessary preliminaries in \cref{sec:preliminary}. Then, in \cref{sec:function}, we will study the quantum query complexity of LMSR.

\section{Preliminaries} \label{sec:preliminary}

In this section, we introduce the notations of strings, the concept of quantum query complexity, and basic quantum subroutines such as quantum minimum finding and quantum string matching.

\subsection{Strings}

Let $\Sigma$ be a finite alphabet with a total order $<$. A string $s \in \Sigma^n$ of length $\abs*{s} = n$ is a function $s \colon \sbra*{n} \to \Sigma$. We write $s\sbra*{i}$ to denote the $i$-th character of $s$ for $i \in \sbra*{n}$, and $s\substr{i}{j}$ to denote the substring $s\sbra*{i} s\sbra*{i+1} \dots s\sbra*{j}$ for $0 \leq i \leq j < n$.
Two strings $s$ and $t$ are equal, denoted by $s = t$, if $\abs*{s} = \abs*{t}$ and $s\sbra*{i} = t\sbra*{i}$ for every $i \in \sbra*{\abs*{s}}$; $s$ is lexicographically smaller than $t$, denoted by $s < t$, if either $s$ is a proper prefix of $t$ or there is an index $i \in \sbra*{\min\cbra*{\abs*{s}, \abs*{t}}}$ such that $s\sbra*{i} < t\sbra*{i}$. We write $s \leq t$ if $s = t$ or $s < t$.

\subsection{Quantum query complexity}

We assume quantum access to the input string $s$ of length $n$. More precisely, there is a quantum unitary oracle $O_s$ such that
\[
    O_s \colon \ket*{i}\ket*{0} \mapsto \ket*{i}\ket*{s\sbra*{i}}
\]
for every $i \in \sbra*{n}$. A quantum query algorithm $A$ with $T$ queries to quantum oracle $O_s$ is a sequence of unitary operators
\[
    A \colon U_0 \to O_s \to U_1 \to O_s \to \dots \to O_s \to U_T,
\]
where $U_0, U_1, \dots, U_T$ are uniform quantum unitary operators that are determined independent of $O_s$ (but depend on $n$). Let $\cbra*{P_y}$ be a quantum measurement in the computational basis. The probability that $A$ outputs $y$ on input $s$ is defined by
\[
    \Pr \sbra*{ A\rbra*{s} = y } = \Abs*{P_y U_T O_s \dots O_s U_1 O_s U_0 \ket{0}}^2.
\]
The quantum query complexity of computational task $f$, denoted by $Q\rbra*{f}$, is a function of $n$ that is the minimal possible number of queries to $O_s$ used in a quantum query algorithm $A$ such that $\Pr \sbra*{ A\rbra*{s} = f\rbra*{s} } \geq 2/3$ for every input $s$ of length $n$.

\subsection{Quantum minimum finding}

Quantum minimum finding \cite{DH96,AK99,DHHM06} is a basic subroutine of query complexity $O\rbra*{\sqrt{n}}$ widely used in quantum algorithms. However, when the quantum oracle is bounded-error, we need $O\rbra*{\log n}$ primitive queries to reduce the error probability for one logical query, which results in a quantum algorithm for minimum finding with query complexity $O\rbra*{\sqrt{n} \log n}$.
Specifically, unitary operator $O_{\mathrm{cmp}}$ is said to be a bounded-error oracle with respect to comparator $\mathrm{cmp} \colon \sbra*{n} \times \sbra*{n} \to \cbra*{0, 1}$, if
\[
    O_{\mathrm{cmp}} \ket*{i}\ket*{j}\ket*{0}\ket*{0}_w = \sqrt{p_{ij}} \ket*{i}\ket*{j}\ket*{\mathrm{cmp}\rbra*{i, j}}\ket*{\phi_{ij}}_w + \sqrt{1-p_{ij}} \ket*{i}\ket*{j}\ket*{1-\mathrm{cmp}\rbra*{i, j}}\ket*{\varphi_{ij}}_w,
\]
where $p_{ij} \geq 2/3$ for every $i \in \sbra*{n}$, and $\ket*{\phi_{ij}}_w$ and $\ket*{\varphi_{ij}}_w$ are ignorable work qubits for every $i, j \in \sbra*{n}$.
Here, the comparator $\mathrm{cmp}$ is assumed to induce a strict total order ``$<_{\mathrm{cmp}}$'' such that $i <_{\mathrm{cmp}} j$ if and only if $\mathrm{cmp}\rbra{i, j} = 1$.
Intuitively, bounded-error quantum oracles are understood as a quantum generalization of probabilistic bounded-error oracles that return the correct answer with probability at least $2/3$.
Inspired by the error reduction for quantum search in \cite{HMdW03}, it was shown in \cite{WY23} that quantum minimum finding can also be solved with query complexity $O\rbra*{\sqrt{n}}$, even if the quantum oracle is bounded-error. Specifically, we can find the minimum element with probability $\geq 2/3$ using $O\rbra*{\sqrt{n}}$ queries to $O_{\mathrm{cmp}}$. We formally state the result as follows.

\begin{lemma} [Quantum minimum finding on bounded-error oracles, {\cite[Lemma 3.4]{WY23}}] \label{lemma:qmin-bounded-error}
    There is a quantum algorithm for minimum finding on bounded-error oracles with query complexity $O\rbra*{\sqrt{n}}$.
\end{lemma}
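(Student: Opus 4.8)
The plan is to adapt the standard quantum minimum finding algorithm of Dürr and Høyer, replacing the internal exact-oracle Grover searches by the robust/bounded-error amplitude amplification of Høyer, Mosca, and de Wolf \cite{HMdW03}. Recall the structure of the Dürr--Høyer procedure: maintain a current candidate index $m$ (initially uniformly random); repeatedly run a Grover search over the set $\set{i \in \sbra*{n}}{\mathit{cmp}\rbra*{i, m} = 1}$, i.e., the set of indices that ``beat'' the current candidate, and update $m$ to any index found; stop after $O\rbra*{\sqrt{n}}$ total queries. With an \emph{exact} comparison oracle the total query cost is $O\rbra*{\sqrt{n}}$ and the success probability is $\Omega\rbra*{1}$. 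The only obstacle to carrying this over verbatim is that our oracle $O_{\mathit{cmp}}$ is bounded-error: the marking predicate inside each Grover search is itself only correct with probability $\geq 2/3$, so naively one would reduce error by $O\rbra*{\log n}$ repetitions per marked-element check, incurring the extra $\log n$ factor we want to avoid.

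The first step is therefore to invoke the robust search primitive: given a bounded-error oracle deciding membership in an unknown subset $S \subseteq \sbra*{n}$, one can find an element of $S$ (or correctly report $S = \emptyset$) with constant success probability using $O\rbra*{\sqrt{n}}$ queries, \emph{without} any $\log n$ overhead. This is exactly the error-reduction-free search of \cite{HMdW03}, and it is the engine that \cite{WY20} used. Here the set $S = S_m = \set{i}{\mathit{cmp}\rbra*{i,m} = 1}$ depends on the current candidate $m$, but for each fixed $m$ the oracle $\ket*{i} \mapsto$ (robust bit for ``$i \in S_m$'') is obtained from $O_{\mathit{cmp}}$ by fixing the second register to $\ket*{m}$, so it is a legitimate bounded-error search oracle.

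The second step is to bound the total cost over all iterations and the overall success probability. Following Dürr--Høyer, the candidate's rank strictly decreases in expectation at a controlled rate, so that summing the search costs over the iterations telescopes to $O\rbra*{\sqrt{n}}$ in expectation; one then truncates at a constant multiple of this expectation, applies Markov's inequality, and uses the constant per-iteration success probability of the robust search together with a union bound (or, more carefully, the martingale-style analysis in \cite{DHHM06}) to conclude that the final candidate is the true minimum with probability $\geq 2/3$. Constant success probability can then be boosted to any constant by a constant number of independent repetitions, keeping the total at $O\rbra*{\sqrt{n}}$ queries.

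The main obstacle, and the only place where real care is needed, is the interaction between the \emph{bounded-error} nature of the comparisons and the adaptive structure of minimum finding: a single erroneous comparison can push the candidate to a wrong (larger) index, from which subsequent searches look for elements that ``beat'' the wrong candidate. One has to argue that such errors occur only a bounded number of times in expectation and that the algorithm recovers — concretely, even a wrong candidate $m$ with rank $r$ still has $r-1$ genuinely smaller indices available to be found, so progress resumes; the clean way to package this is to absorb the $\Omega\rbra*{1}$ comparison-error probability into the $\Omega\rbra*{1}$ search-failure probability already present in the Dürr--Høyer analysis, so that the existing expected-cost and success-probability bounds go through with adjusted constants. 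Since this is precisely \cite[Lemma 3.4]{WY20}, I would present the argument at the level of detail above and cite \cite{WY20} for the verification of the constants.
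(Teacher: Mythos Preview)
The paper does not give its own proof of this lemma: it is stated as a citation of \cite[Lemma~3.4]{WY20}, with only the one-sentence remark that the result is ``inspired by the error reduction for quantum search in \cite{HMdW03}.'' Your proposal is therefore not competing against any argument in the present paper; it is a sketch of the argument behind the cited lemma, and as such it is accurate and matches the intended approach: run the D\"urr--H{\o}yer minimum-finding loop, but replace each inner Grover search by the robust (bounded-error) search of \cite{HMdW03}, so that no $O(\log n)$ error-reduction overhead is incurred, and then observe that the D\"urr--H{\o}yer expected-cost and success-probability analysis survives because the constant per-call failure probability of the robust search can be absorbed into the constants already present in that analysis. There is nothing to correct; if anything, since the paper treats this as a black-box citation, you could shorten your write-up to the same one-line pointer to \cite{WY20} and \cite{HMdW03}.
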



\subsection{Quantum divide and conquer} \label{sec:qdc}

Divide and conquer is an important and basic idea in the design of (classical) algorithms. 
Recently, it was shown to be useful in quantum computing with a concrete example in string problems \cite{AJ22}. 
The complexity of quantum divide and conquer was later investigated in \cite{CKKD+22} and \cite{ABB+23}. 
Here, we briefly introduce the idea of quantum divide and conquer. 

Suppose a problem of size $n$ can be solved by combining the answers to several smaller problems of the same structure, say $a$ sub-problems of size $n/b$. 
Let $T\rbra{n}$ be the quantum query complexity of the problem of size $n$. 
If the answer to the problem of size $n$ can be obtained by taking the minimum (or maximum) answer over all the $a$ sub-problems of size $n/b$ with postprocessing complexity $f\rbra{n}$, then the following recurrence relation holds:
\[
    T\rbra{n} \leq \widetilde O\rbra*{\sqrt{a}} \cdot T\rbra*{\frac{n}{b}} + f\rbra{n},
\]
where the factor $\widetilde O\rbra{\sqrt{a}}$ is achieved by applying the quantum minimum finding \cite{DH96}. 

It is worth noting that, using the observation of this paper, the recurrence relation can be further improved to 
\[
    T\rbra{n} \leq O\rbra*{\sqrt{a}} \cdot T\rbra*{\frac{n}{b}} + f\rbra{n}
\]
by the quantum minimum finding on bounded-error oracles (see \cref{lemma:qmin-bounded-error}), removing the polylogarithmic factors from $\widetilde O\rbra{\sqrt{a}}$. 
It turns out that this improvement in the recurrence relation can further lead to quasi-polylogarithmic improvements in the quantum query complexity of certain problems (see \cref{thm:function}), which is the main contribution of this paper. 

\begin{remark} \label{remark:non-Boolean}
    A similar improvement in the recurrence relation of quantum divide and conquer was achieved in \cite{CKKD+22} for the Boolean case where the answer to each (sub-)problem is either $0$ or $1$. 
    In this case, taking the minimum (resp.\ maximum) degenerates to the Boolean AND (resp.\ OR) operation. 
    However, it is not clear how to generalize their results to non-Boolean cases. 
\end{remark}

\subsection{Basic quantum subroutines for string problems} \label{sec:quantum-string-matching}

Comparing two strings in the lexicographical order is a basic problem in stringology, which can be done by directly applying Grover search \cite{Gro96}, quadratically better than any classical string comparators.
For completeness, we state a simple quantum subroutine for string comparator (see, for example, \cite[Lemma 4.1]{WY23}).

\begin{lemma} \label{lemma:cmp}
    Given two strings $s$ and $t$, there is a quantum algorithm that determines whether $s < t$ or not with query complexity $O\rbra{\sqrt{\min\cbra{\abs{s}, \abs{t}}}}$, with success probability $\geq 2/3$.
\end{lemma}

Next, we introduce the quantum string matching algorithm \cite{RV03} and its improved version \cite{WY23}, which is based on deterministic sampling \cite{Vis91}.

\begin{lemma} [Quantum string matching, {\cite[Section 4.3.2]{WY23}}] \label{lemma:quantum-string-matching}
    Given text $t$ of length $n$ and pattern $p$ of length $m$, there is a quantum algorithm to finds the leftmost and rightmost occurrences of $p$ in $t$ or reports that $p$ does not occur in $t$ with query complexity $O\rbra*{\sqrt{n\log m} + \sqrt{m \log^3 m \log \log m}}$, with success probability $\geq 2/3$.
\end{lemma}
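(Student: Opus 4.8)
The plan is to run the two subroutines in sequence: first compute a deterministic sample of the pattern $p$, and then feed it into the matching routine of Lemma \ref{lemma:quantum-string-matching-with-deterministic-sampling}. If $m > n$ then $p$ cannot occur in $t$ and there is nothing to do, so I may assume $m \leq n$; likewise small $m$ is handled trivially, so I may assume $m$ is large enough that $\log m$ and $\log\log m$ are meaningful.

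First I would invoke Lemma \ref{lemma:quantum-deterministic-sampling} on $p$ with a fixed small constant failure probability, say $\delta = 1/6$. Since $m$ is large, $\log\max\cbra*{1/\delta, m} = O\rbra*{\log m}$, so this step costs
\[
O\rbra*{\sqrt{m \log m \log\log m}\cdot \log m} = O\rbra*{\sqrt{m \log^3 m \log\log m}}
\]
queries and produces, with probability at least $5/6$, a valid deterministic sample of $p$ of size $O\rbra*{\log m}$. Next, conditioned on having a correct deterministic sample, I would apply Lemma \ref{lemma:quantum-string-matching-with-deterministic-sampling} to the text $t$, finding the leftmost and rightmost occurrences of $p$ in $t$ (or reporting that there are none) using $O\rbra*{\sqrt{n \log m}}$ queries; by a constant number of repetitions one may push this step's failure probability below $1/6$ without changing the asymptotics. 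Summing the two query costs gives the claimed bound $O\rbra*{\sqrt{n\log m} + \sqrt{m \log^3 m \log\log m}}$, and a union bound over the two failure events leaves the overall success probability at least $2/3$.

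The one point that requires care is the composition of a bounded-error subroutine (deterministic sampling) with the matching step that consumes its output: an incorrect deterministic sample could silently corrupt the matching result, so the argument must keep the sampling failure probability a fixed constant and amplify the matching step separately, rather than treating the two as a black box. Keeping $\delta$ a constant is also exactly what prevents the sampling cost from acquiring an extra $\log n$ factor, yielding $\sqrt{m\log^3 m \log\log m}$ rather than $\sqrt{m \log m \log\log m}\cdot\log n$. Beyond this bookkeeping I do not anticipate any real obstacle: the lemma is essentially the direct concatenation of Lemma \ref{lemma:quantum-deterministic-sampling} and Lemma \ref{lemma:quantum-string-matching-with-deterministic-sampling}.
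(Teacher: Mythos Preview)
Your proposal is correct and matches the paper's approach exactly: the paper simply states that combining Lemma~\ref{lemma:quantum-deterministic-sampling} with Lemma~\ref{lemma:quantum-string-matching-with-deterministic-sampling} yields the result, and your argument spells out precisely that composition, including the error-probability bookkeeping the paper leaves implicit.
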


\section{Quantum Query Complexity of LMSR} \label{sec:function}

In this section, we study the quantum query complexity of the function problem of LMSR.

\subsection{Reduction to minimal length-\texorpdfstring{$\ell$}{l} substrings}

As in \cite{AJ22}, LMSR is reduced to the problem of minimal length-$\ell$ substrings, which is defined as follows.

\begin{problem} [Minimal length-$\ell$ substrings] \label{prob:length-l}
    Given a string $s$ of length $n$, find the smallest index $k \in \sbra*{n-\ell}$ such that $s\substr{k}{k+\ell-1} \leq s\substr{i}{i+\ell-1}$ for all $i \in \sbra*{n-\ell}$.
\end{problem}

\begin{remark}
    The problem of minimal length-$\ell$ substrings defined in Problem \ref{prob:length-l} is slightly different from that in \cite{AJ22}. For our purpose, we only have to focus on the leftmost occurrence of the minimal length-$\ell$ substring.
\end{remark}

Here, we recall the exclusion rule for minimal length-$\ell$ substrings used in \cite{AJ22}.

\begin{theorem} [Exclusion rule for minimal length-$\ell$ substrings, {\cite[Lemma 4.8]{AJ22}}] \label{thm:exclusion-rule}
    Suppose $s$ is a string of length $n$, and $n/2 \leq \ell \leq n$. Let
    \[
        I = \set{ k \in \sbra*{n-\ell+1} }{ s\substr{k}{k+\ell-1} = \min_{i \in \sbra*{n-\ell}} s\substr{i}{i+\ell-1} }
    \]
    be the set of all indices of minimal length-$\ell$ substrings of $s$, which form an arithmetic progression. For every integer $a \geq 0$ and $m \geq 1$ with $a + m \leq n - \ell$, let
    \[
        J = \set{ a \leq k < a + m }{ s\substr{k}{k+m-1} = \min_{a \leq i < a + m} s\substr{i}{i+m-1} }
    \]
    be the set of all indices of minimal length-$m$ substrings of $s\substr{a}{a+2m-1}$. If $I \cap \cbra*{\min J, \max J} = \emptyset$, then $I \cap J = \emptyset$.
\end{theorem}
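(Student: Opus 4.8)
The plan is to prove the contrapositive: assuming there is some $k \in I \cap J$, I will show that $\min J \in I$ or $\max J \in I$. Write $w = s\substr{k}{k+\ell-1}$ for the (globally) minimal length-$\ell$ substring, and for an admissible index $i$ put $u_i = s\substr{i}{i+m-1}$ and $w_i = s\substr{i}{i+\ell-1}$. Since $a \geq 0$ and $\ell \geq n/2$ force $m \leq a+m \leq n-\ell \leq n/2 \leq \ell$, each $u_i$ is a prefix of $w_i$; in particular every index in $J$ shares the common prefix $v := u_k$, and $\min J, k, \max J$ all lie in $\sbra*{n-\ell}$, so $w \leq w_{\min J}$ and $w \leq w_{\max J}$. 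If $k \in \cbra*{\min J, \max J}$ we are done, so assume $\min J < k < \max J$ from now on.

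Next I would record the combinatorial shape of $J$. Any two of its elements lie among the $m$ consecutive positions $a, a+1, \dots, a+m-1$, hence differ by less than $\abs*{v} = m$, so the corresponding occurrences of $v$ pairwise overlap; by the Fine--Wilf periodicity lemma this forces $J$ to be an arithmetic progression with some common difference $p$, and makes $s\substr{\min J}{\max J + m - 1}$ have period $p$. (This is the same type of periodicity fact that underlies the arithmetic-progression claim about $I$ in the hypothesis.) In particular $p$ divides both $k - \min J$ and $\max J - k$.

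The key step is to lift this local periodicity to length $\ell$. Using that $s\substr{\min J}{\max J+m-1}$ has period $p$ and that $p \mid (k - \min J)$, one checks that $w_{\min J}$ and $w = w_k$ agree on their first $\max J - k + m$ characters; symmetrically $w_{\max J}$ and $w$ agree on their first $(k - \min J) + m$ characters. If either count is at least $\ell$ then $w_{\min J} = w$ or $w_{\max J} = w$, hence $\min J \in I$ or $\max J \in I$, and we are done. Otherwise, if moreover $\min J \notin I$ and $\max J \notin I$, then $w < w_{\min J}$ and $w < w_{\max J}$, and the first mismatch of $w_{\min J}$ with $w$ occurs at some position $t_1$ with $\max J - k + m \leq t_1 < \ell$, while that of $w_{\max J}$ with $w$ occurs at some $t_2$ with $(k - \min J) + m \leq t_2 < \ell$; by minimality of $w$, at $t_1$ (resp.\ $t_2$) the character on the $w_k$ side is strictly smaller than the one on the $w_{\min J}$ (resp.\ $w_{\max J}$) side.

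Closing this last case is, I expect, the main obstacle. Here I would try to combine the two mismatches with the minimality of $w$ to reach a contradiction: the aim is to produce an index $k'$ strictly between $\min J$ and $\max J$ such that $w_{k'}$ inherits the good prefix of $w$ (by shifting $k$ through a suitable multiple of $p$, using the period-$p$ structure of $s$ around the window) yet then runs into one of the strictly smaller characters found above, giving $w_{k'} < w$ and contradicting minimality. Making this rigorous requires a careful case analysis of how far the period-$p$ region of $s$ really extends on each side of the window relative to the two mismatch positions $t_1$ and $t_2$, and may call for a second use of Fine--Wilf, applied this time to $w$ itself (whose minimal period is automatically a multiple of $p$), in order to rule out a long ``staircase'' of partial mismatches. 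Everything else --- the prefix-agreement counts and the reduction to the contrapositive --- is routine index bookkeeping.
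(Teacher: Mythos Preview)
This theorem is not proved in the paper at all: it is quoted from \cite[Lemma~4.8]{AJ22} and used as a black box, so there is no in-paper argument to compare your proposal against.

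On the merits of your sketch: the contrapositive setup, the identification of $J$ as an arithmetic progression with common difference $p$, and the resulting period-$p$ structure on $s\substr{\min J}{\max J+m-1}$ are all correct and standard. Your count that $w_{\min J}$ and $w_k$ agree on their first $\max J-k+m$ characters is also right: stepping from $\min J+j$ to $k+j$ by increments of $p$ stays inside the periodic window precisely when $k+j\le \max J+m-1$. However, the ``symmetric'' claim that $w_{\max J}$ and $w_k$ agree on their first $(k-\min J)+m$ characters does \emph{not} follow. The periodic window ends at position $\max J+m-1$; when you compare $s\sbra*{k+j}$ with $s\sbra*{\max J+j}$ the chain of period-$p$ equalities has largest index $\max J+j$, which forces $j\le m-1$. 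So only the first $m$ characters can be certified equal this way, not $(k-\min J)+m$. The ``symmetry'' you invoke is purely formal (swapping $\min J$ and $\max J$ in the formula), but the periodic region is not symmetric about $k$: it extends $m-1$ positions to the right of $\max J$ but only $0$ positions to the left of $\min J$.

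Since your downstream plan (the mismatch positions $t_1,t_2$ and the hoped-for contradiction via a shifted index $k'$) is built on both agreement counts, this error undermines the case split before you even reach the part you already flag as ``the main obstacle''. That final step is also left as a heuristic rather than an argument. If you want to repair the approach you will need a genuinely different treatment of the $\max J$ side, one that does not pretend the period extends past $\max J+m-1$; the proof in \cite{AJ22} handles this asymmetry by a different route.
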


For simplicity as well as completeness, we give a direct reduction from LMSR to minimal length-$\ell$ substrings as follows.

\begin{proposition} \label{prop:LMSR-to-minimal-length-l}
    LMSR can be reduced to the problem of minimal length-$\ell$ substrings defined in Problem \ref{prob:length-l}.
\end{proposition}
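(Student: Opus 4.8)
The plan is to exhibit the function problem of LMSR on a length-$n$ string as a single instance of Problem \ref{prob:length-l} on the doubled string, incurring no extra queries.

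First I would pass from $s$ to $t = ss$, of length $\abs*{t} = 2n$, and record the window identity: for every $k \in \sbra*{n}$ one has $t\substr{k}{k+n-1} = s\substr{k}{n-1}\, s\substr{0}{k-1} = s^{\rbra*{k}}$, since a position $p$ with $k \leq p \leq k+n-1$ satisfies $t\sbra*{p} = s\sbra*{p}$ when $p < n$ and $t\sbra*{p} = s\sbra*{p-n}$ when $p \geq n$. Consequently, the solution of the minimal length-$\ell$ substrings problem on input $\rbra*{t, \ell}$ with $\ell = n$ --- namely the smallest index $k$ in the admissible range $\sbra*{\abs*{t}-\ell} = \sbra*{n}$ with $t\substr{k}{k+\ell-1}$ lexicographically minimal among $t\substr{i}{i+\ell-1}$ over $i \in \sbra*{\abs*{t}-\ell} = \sbra*{n}$ --- equals, by the window identity, the smallest $k \in \sbra*{n}$ with $s^{\rbra*{k}} \leq s^{\rbra*{i}}$ for all $i \in \sbra*{n}$, i.e.\ the answer to Problem \ref{prob:function}.

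Next I would check that the reduction is free in the query model: a query to the oracle $O_t$ at index $j \in \sbra*{2n}$ is served by one query to $O_s$ at index $j \bmod n$, since $t\sbra*{j} = s\sbra*{j \bmod n}$. Hence a $T$-query quantum algorithm for Problem \ref{prob:length-l} on strings of length $2n$ yields a $T$-query quantum algorithm for LMSR on strings of length $n$, and as $\sqrt{2n} \cdot 2^{O\rbra*{\sqrt{\log 2n}}} = \sqrt{n} \cdot 2^{O\rbra*{\sqrt{\log n}}}$ the target complexity is preserved.

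I do not anticipate a genuine obstacle; the only points requiring care are bookkeeping. One must verify the window identity on the whole range $k \in \sbra*{n}$, confirm that the admissible start set $\sbra*{\abs*{t}-\ell}$ for $\ell = n$ coincides with $\sbra*{n}$ exactly so that no rotation is omitted and no spurious window slips in, and note that the boundary case $\ell = n = \abs*{t}/2$ still lies in the regime $\ell \geq \abs*{t}/2$ under which the exclusion rule of Theorem \ref{thm:exclusion-rule} is stated. With these checks the reduction is immediate.
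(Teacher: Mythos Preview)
Your proposal is correct and follows exactly the same approach as the paper: double the string to $t=ss$, take $\ell=n$, and identify the rotations $s^{(k)}$ with the length-$n$ windows of $t$ starting in $\sbra*{n}$. You supply considerably more detail than the paper's two-line proof (the explicit window identity, the index-range check $\sbra*{\abs*{t}-\ell}=\sbra*{n}$, and the query-model observation), but the underlying reduction is identical.
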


\begin{proof}
    Suppose string $s$ of length $n$ is given, and we want to find the LMSR of $s$.
    Let $s' = ss$ be the concatenation of two $s$'s.
    Then it can be shown that the LMSR of $s$ is the leftmost minimal length-$n$ substring of $s'$ of length $2n$.
\end{proof}

\subsection{Improved quantum query complexity}

We provide a quantum algorithm for the problem of minimal length-$\ell$ substrings with better quantum query complexity as follows.

\begin{theorem} [Quantum query complexity of LMSR] \label{thm:function}
    The worst-case quantum query complexity of Problem \ref{prob:function} is $\sqrt{n} \cdot 2^{O\rbra*{\sqrt{\log n}}}$.
\end{theorem}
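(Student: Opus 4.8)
The plan is to follow the divide-and-conquer skeleton of \cite{AJ22} via the reduction in Proposition \ref{prop:LMSR-to-minimal-length-l}, but to shave the polylogarithmic overhead on \emph{each level} of the recursion using Lemma \ref{lemma:qmin-bounded-error} (minimum finding on bounded-error oracles) instead of naive error amplification, so that the recurrence becomes
\[
T\rbra*{n} \leq O\rbra*{ \sqrt{b} \rbra*{ T\rbra*{\tfrac{n}{b}} + \sqrt{n} } } + \tilde O\rbra*{\sqrt{n}}
\]
rather than the weaker $T\rbra*{n} \leq \tilde O\rbra*{\sqrt{b}}\rbra*{T\rbra*{n/b} + \tilde O\rbra*{\sqrt{n}}} + \tilde O\rbra*{\sqrt{n}}$. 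First I would reduce, using Proposition \ref{prop:LMSR-to-minimal-length-l}, to finding the leftmost minimal length-$n$ substring of $s' = ss$; so it suffices to solve Problem \ref{prob:length-l} for $\ell$ roughly half the working length. Then I would set up the divide step: partition the relevant index range of length $m$ into $b$ blocks each of length $m/b$; recursively compute a minimal length-$(m/b)$ substring within (an appropriate window around) each block; then, invoking the exclusion rule of Theorem \ref{thm:exclusion-rule}, argue that the true minimal length-$\ell$ substring's starting index must lie among the $b$ candidate indices returned, so a single minimum-finding pass over these $b$ candidates (comparing the associated length-$\ell$ substrings) suffices to combine.

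The key technical point — and the source of the improvement — is the combine step. Comparing two length-$\ell$ substrings of $s$ costs $\tilde O\rbra*{\sqrt{\ell}} = \tilde O\rbra*{\sqrt{n}}$ quantum queries, but this comparison is only \emph{bounded-error} (it uses quantum search / first-mismatch finding internally). The naive approach amplifies each comparison to error $1/\poly(b)$ before feeding it to minimum finding, incurring a $\log b$ (hence eventually $\log n$) factor per level. Instead I would phrase the combine step as an instance of minimum finding over the $b$ candidates with a bounded-error comparator oracle of cost $\tilde O\rbra*{\sqrt{n}}$, and apply Lemma \ref{lemma:qmin-bounded-error} directly: this gives a combine cost of $O\rbra*{\sqrt{b}} \cdot \tilde O\rbra*{\sqrt{n}} + (\text{cost to query the }b\text{ recursive sub-answers})$, where crucially the $\sqrt{b}$ is \emph{not} multiplied by a $\log b$. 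Each of the $\sqrt{b}$ (coherent) probes into the candidate list must itself trigger one recursive call, contributing the $O\rbra*{\sqrt{b}} T\rbra*{n/b}$ term; care is needed to make these recursive calls bounded-error (reusing the same Lemma \ref{lemma:qmin-bounded-error} robustness at each level) and to pipe a deterministic-sample / block-boundary preprocessing cost of $\tilde O\rbra*{\sqrt{n}}$ as the additive term. I would then verify the exclusion rule applies at each level — specifically that excluding the interval $\cbra{\min J,\max J}$ at the sub-problem scale cannot exclude a genuine global minimizer — which is exactly the content of Theorem \ref{thm:exclusion-rule} with $m = \ell/b$ (or the relevant scaling).

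Finally I would solve the recurrence. Writing $T\rbra*{n} = \sqrt{n}\, R\rbra*{n}$, the relation $T\rbra*{n} \leq O\rbra*{\sqrt{b}\,T\rbra*{n/b}} + O\rbra*{b\sqrt{n}} + \tilde O\rbra*{\sqrt{n}}$ becomes, after dividing by $\sqrt{n}$, roughly $R\rbra*{n} \leq O\rbra*{1} R\rbra*{n/b} + O\rbra*{\sqrt{b}} + \tilde O\rbra*{1}$ — wait, more carefully: with the $\sqrt{b}$ multiplying $T\rbra*{n/b} = \sqrt{n/b}\,R\rbra*{n/b}$ we get $\sqrt{b}\sqrt{n/b} = \sqrt{n}$, so $R\rbra*{n} \leq R\rbra*{n/b} + O\rbra*{\sqrt{b}} + \polylog(n)$. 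Iterating over $\log_b n$ levels gives $R\rbra*{n} = O\rbra*{\sqrt{b}\log_b n} + \polylog(n) = O\rbra*{\tfrac{\sqrt{b}\log n}{\log b}} + \polylog(n)$, and choosing $\log b = \Theta\rbra*{\sqrt{\log n}}$ (so $b = 2^{\Theta(\sqrt{\log n})}$) balances the two contributions at $2^{O\rbra*{\sqrt{\log n}}}$, which dominates the $\polylog(n)$ term and yields $T\rbra*{n} = \sqrt{n}\cdot 2^{O\rbra*{\sqrt{\log n}}}$. I expect the main obstacle to be the bookkeeping that makes the combine step honestly match the idealized recurrence: ensuring the bounded-error comparator for length-$\ell$ substrings composes correctly with Lemma \ref{lemma:qmin-bounded-error} \emph{recursively} (each candidate query being itself a bounded-error recursive invocation, so errors do not compound across $\log_b n$ levels), and correctly handling the non-divisibility / boundary-window issues so the exclusion rule of Theorem \ref{thm:exclusion-rule} is applicable at every scale.
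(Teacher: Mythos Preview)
Your proposal is correct and follows essentially the same approach as the paper: reduce via Proposition~\ref{prop:LMSR-to-minimal-length-l}, divide into $O(b)$ overlapping blocks, recurse to get the leftmost/rightmost minimal short substrings in each block, invoke the exclusion rule (Theorem~\ref{thm:exclusion-rule}), and combine using Lemma~\ref{lemma:qmin-bounded-error} with a bounded-error comparator that internally performs the recursive call plus an $O(\sqrt{\ell})$ Grover comparison, yielding $T(n) \le O(\sqrt{b})\,(T(n/b)+O(\sqrt{n})) + \tilde O(\sqrt{n})$ and hence $T(n)=\sqrt{n}\cdot 2^{O(\sqrt{\log n})}$ at $b=2^{\Theta(\sqrt{\log n})}$. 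One small caution in your recurrence analysis: after substituting $T(n)=\sqrt{n}\,R(n)$ you wrote $R(n)\le R(n/b)+O(\sqrt{b})+\polylog(n)$, but the hidden constant $c>1$ in $O(\sqrt{b})\,T(n/b)$ actually gives $R(n)\le c\,R(n/b)+\cdots$, which compounds to $c^{\log_b n}=2^{O(\sqrt{\log n})}$ over the $\Theta(\sqrt{\log n})$ levels --- this is harmlessly absorbed into the final bound, and the paper handles it by a direct induction with an explicit choice $b(n)=2^{2\sqrt{\log c}\,\sqrt{\log n}}$.
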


\begin{proof}
    The framework of our algorithm follows from that of {\cite[Theorem 4.1]{AJ22}} with improvements.
    By \cref{prop:LMSR-to-minimal-length-l}, the LMSR of a string $s$ can be reduced to finding the leftmost minimal length-$n$ substring of $s' = ss$.
    For simplicity, we consider how to find the leftmost minimal length-$\ell$ substring of a string $s$ of length $n$ with $n/2 \leq \ell \leq n$.
    Suppose $s$ is a string of length $n$, and $O_s$ is the quantum oracle with access to $s$.
    For convenience, we also find the rightmost minimal length-$\ell$ substring of $s$.
    Let $\mathsf{solve}\rbra*{s, \ell}$ be a function that returns two indices $x$ and $y$ indicating the leftmost and rightmost indices of the minimal length-$\ell$ substrings, denoted as $\rbra{x, y} \gets \mathsf{solve}\rbra*{s, \ell}$.

    \textbf{Dividing into sub-problems}.
    Let $b \geq 1$ be some parameter to be determined that depends on $n$, and $m = \ceil*{\ell/b}$. We will recursively reduce the problem of minimal length-$\ell$ substrings of size $n$ to $\ceil*{n/m}-1$ sub-problems of size at most $2m$.

    To this end, for every $i \in \sbra*{\ceil*{n/m}-1}$, the string of the $i$-th sub-problem is
    \[
    s_i = s\substr{im}{\min\cbra*{\rbra*{i+2}m-1, n - 1}},
    \]
    and we recursively solve the problem of minimal length-$\ell_i$ substrings with $\ell_i = m$.
    Suppose $x_i$ (resp. $y_i$) is the index of the leftmost (resp. rightmost) minimal length-$\ell_i$ substring of $s_i$, where $im \leq x_i \leq y_i \leq \min\cbra*{\rbra*{i+2}m-1, n - 1}$. Note that
    \[
        s\substr{x_i}{x_i + m - 1} = s\substr{y_i}{y_i + m - 1} = \min_{im \leq j \leq \min\cbra*{\rbra*{i+2}m-1, n - 1}} s\substr{j}{j+m-1}.
    \]
    By Theorem \ref{thm:exclusion-rule}, we know that the index of the minimal length-$\ell$ substring of $s$ exists among the indices $x_i$ and $y_i$. Therefore, we only have to find the minimal length-$\ell$ substring among the $2\rbra*{\ceil*{n/m}-1}$ indices $x_i$ and $y_i$. To achieve this, we construct a comparator $\mathrm{cmp} \colon \sbra*{\ceil*{n/m}-1} \times \sbra*{\ceil*{n/m}-1} \to \cbra*{0, 1}$ such that
    \[
        \mathrm{cmp}\rbra*{i, j} = \begin{cases}
            1, & t_i < t_j \lor \rbra{t_i = t_j \land i < j}, \\
            0, & \text{otherwise},
        \end{cases}
    \]
    where $t_i = \min\cbra*{s\substr{x_i}{x_i+\ell-1}, s\substr{y_i}{y_i+\ell-1}}$.
    It can be verified that $\mathrm{cmp}$ induces a strict total order ``$<_{\mathrm{cmp}}$''.

    \textbf{Solving minimal length-$\ell$ substrings with queries to $O_{\mathrm{cmp}}$}.
    Assume that there is a bounded-error quantum oracle $O_{\mathrm{cmp}}$ with respect to the comparator $\mathrm{cmp}$.
    Then, by \cref{lemma:qmin-bounded-error}, the leftmost minimal length-$\ell$ substring of $s$ is the minimum index among $\sbra*{\ceil*{n/m}-1}$ under the strict total order ``$<_{\mathrm{cmp}}$'' induced by the comparator $\mathrm{cmp}$, which can be found using $O\rbra{\sqrt{n/m}}$ queries to $O_{\mathrm{cmp}}$, with success probability $\geq 0.99$.
    Using $O_{\mathrm{cmp}}$, we can find the leftmost minimal length-$\ell$ substring as follows.
    \begin{enumerate}
        \item Find the minimum index $z \in \sbra*{\ceil*{n/m}-1}$ under the strict total order ``$<_{\mathrm{cmp}}$'' by \cref{lemma:qmin-bounded-error}, with success probability $\geq 0.99$, using $O\rbra{\sqrt{n/m}}$ queries to $O_{\mathrm{cmp}}$.
        \item Find the leftmost occurrence $x$ and the rightmost occurrence $y$ of $s\substr{z}{z+\ell-1}$ in $s$ by \cref{lemma:quantum-string-matching}, with success probability $\geq 0.99$, using $O\rbra*{\sqrt{n \log \ell} + \sqrt{\ell \log^3 \ell \log \log \ell}}$ queries to $O_{s}$.
    \end{enumerate}

    Let $T^{\mathsf{solve}}\rbra{n}$ be the quantum query complexity for finding the minimal length-$\ell$ substrings of a string of length $n$, and let $T^{\mathrm{cmp}}\rbra{n}$ be the quantum query complexity of $O_{\mathrm{cmp}}$.
    Both complexities are measured by the number of queries to the quantum oracle $O_{s}$.
    Then, according to the above simple procedure, we have
    \begin{equation} \label{eq:def-T-solve}
    T^{\mathsf{solve}}\rbra{n} = O\rbra*{\sqrt{\frac{n}{m}}} \cdot T^{\mathrm{cmp}}\rbra{n} + O\rbra*{\sqrt{n \log \ell} + \sqrt{\ell \log^3 \ell \log \log \ell}}.
    \end{equation}

    \textbf{Implementation of $O_{\mathrm{cmp}}$}.
    Next, we show how to implement a bounded-error quantum oracle $O_{\mathrm{cmp}}$ with respect to the comparator $\mathrm{cmp}$, given a (quantum) algorithm for computing $\mathsf{solve}\rbra*{s_i, \ell_i}$ with bounded error (say, with success pobability $\geq 0.99$).
    This is done through three steps as follows.
    \begin{enumerate}
        \item Let $\rbra{x_i, y_i} \gets \mathsf{solve}\rbra*{s_i, \ell_i}$ with success probability $\geq 0.99$. By induction, this can be done by using $T^{\mathsf{solve}}\rbra{\abs{s_i}}$ queries to $O_s$.
        \item Let $\rbra{x_j, y_j} \gets \mathsf{solve}\rbra*{s_j, \ell_j}$ with success probability $\geq 0.99$. By induction, this can be done by using $T^{\mathsf{solve}}\rbra{\abs{s_j}}$ queries to $O_s$.
        \item Check if $t_i < t_j$ or $t_i = t_j \land i < j$ according to $x_i, y_i, x_j, y_j$ by \cref{lemma:cmp} with success probability $\geq 0.99$.
        This can be done by using $O\rbra{\sqrt{\min\cbra{\abs{t_i}, \abs{t_j}}}} = O\rbra{\sqrt{\ell}}$ queries to $O_s$.
    \end{enumerate}
    It can be seen that the above process will compare two indices with respect to $<_{\mathrm{cmp}}$ with success probability $\geq 0.99^3 \geq 0.9$, and the quantum query complexity of $O_{\mathrm{cmp}}$ is
    \begin{align}
    T^{\mathrm{cmp}}\rbra*{n}
    & = T^{\mathsf{solve}}\rbra*{\abs{s_i}} + T^{\mathsf{solve}}\rbra*{\abs{s_j}} + O\rbra*{\sqrt{\ell}} \nonumber \\
    & \leq T^{\mathsf{solve}}\rbra*{2m} + T^{\mathsf{solve}}\rbra*{2m} + O\rbra*{\sqrt{\ell}} \nonumber \\
    & = 2 T^{\mathsf{solve}}\rbra*{2m} + O\rbra*{\sqrt{\ell}}. \label{eq:def-T-cmp}
    \end{align}

    \textbf{Complexity analysis}.
    By \cref{eq:def-T-solve} and \cref{eq:def-T-cmp}, we have
    \[
        T^{\mathsf{solve}}\rbra{n} \leq O\rbra*{\sqrt{\frac{n}{m}}} \cdot \rbra*{2 T^{\mathsf{solve}}\rbra*{2m} + O\rbra*{\sqrt{\ell}}} + O\rbra*{\sqrt{n \log \ell} + \sqrt{\ell \log^3 \ell \log \log \ell}}.
    \]
    Since $n/2 \leq \ell \leq n$ and $m = \ceil{\ell/b}$, we have
    \begin{equation} \label{eq:T-solve-with-b}
    T^{\mathsf{solve}}\rbra{n} \leq O\rbra*{\sqrt{b}} \cdot \rbra*{T^{\mathsf{solve}}\rbra*{\ceil*{\frac{2n}{b}}} + O\rbra*{\sqrt{n}}} + O\rbra*{\sqrt{n \log^3 n \log \log n}}.
    \end{equation}
    Using $b' = b/4$ and with the assumption that $1 \leq b \leq n/2$,
    \cref{eq:T-solve-with-b} becomes
    \begin{align*}
        T^{\mathsf{solve}}\rbra{n}
        & \leq O\rbra*{\sqrt{4b'}} \cdot \rbra*{T^{\mathsf{solve}}\rbra*{\frac{4n}{b}} + O\rbra*{\sqrt{n}}} + O\rbra*{\sqrt{n \log^3 n \log \log n}} \\
        & = O\rbra*{\sqrt{b'}} \cdot \rbra*{T^{\mathsf{solve}}\rbra*{\frac{n}{b'}} + O\rbra*{\sqrt{n}}} + O\rbra*{\sqrt{n \log^3 n \log \log n}}.
    \end{align*}
    Therefore, an upper bound on $T^{\mathsf{solve}}\rbra{n}$ can be given by the solution to the following recurrence relation of the form
    \begin{equation} \label{eq:recurrence-main}
        T\rbra{n} \leq c \cdot \rbra*{\sqrt{b} T\rbra*{\frac{n}{b}} + \sqrt{bn} + \sqrt{n \log^3 n \log \log n}},
    \end{equation}
    where $c > 1$ is a constant.

    \textbf{Solving the recurrence relation}.
    In the following, we write $\log n = \log_2 n$ for convenience.
    Let $d = 2 \sqrt{\log c} > 0$ and choose
    \begin{equation} \label{eq:def-b}
        b\rbra*{n} = 2^{d\sqrt{\log n}} = 2^{O\rbra*{\sqrt{\log n}}} = \omega\rbra*{\poly\rbra*{\log n}}.
    \end{equation}
    We will show by induction that the solution to \cref{eq:recurrence-main} satisfies that
    \begin{equation} \label{eq:induction}
        T\rbra*{n} \leq \sqrt{n} b\rbra*{n}
    \end{equation}
    for sufficiently large $n$. For sufficiently large $n$ such that $\log^3 n \log \log n \leq b\rbra*{n} \leq n/2$, the recurrence in \cref{eq:recurrence-main} becomes
    \begin{align*}
        T\rbra*{n}
        & \leq c \cdot \rbra*{ \sqrt{b\rbra*{n}} T\rbra*{\frac{n}{b\rbra*{n}}} + \sqrt{n b\rbra*{n}} + \sqrt{n \log^3 n \log \log n} } \\
        & \leq c \cdot \rbra*{ \sqrt{b\rbra*{n}} T\rbra*{\frac{n}{b\rbra*{n}}} + 2\sqrt{n b\rbra*{n}} }. \label{eq:T-simplified}
    \end{align*}
    By induction on $n$ in \cref{eq:induction}, it holds that $T\rbra{n/b\rbra{n}} \leq \sqrt{n/b\rbra{n}} b\rbra{n/b\rbra{n}}$, and we have
    \begin{align*}
        T\rbra*{n}
        & \leq c \cdot \rbra*{ \sqrt{b\rbra*{n}} \cdot \sqrt{\frac{n}{b\rbra*{n}}} b\rbra*{\frac{n}{b\rbra*{n}}} + 2\sqrt{n b\rbra*{n}} } \\
        & = c \sqrt{n} \cdot \rbra*{ b\rbra*{\frac{n}{b\rbra*{n}}} + 2 \sqrt{b\rbra*{n}} }.
    \end{align*}
    To see that $T\rbra*{n} \leq \sqrt{n} b\rbra*{n}$ for sufficiently large $n$, it is sufficient to show that
    \begin{equation} \label{eq:recurrence-function}
        \lim_{n \to \infty} \frac{b\rbra*{\dfrac{n}{b\rbra*{n}}} + 2\sqrt{b\rbra*{n}}}{b\rbra*{n}} < \frac 1 c.
    \end{equation}
    By direct calculation based on \cref{eq:def-b}, we have the identity
    \[
        b\rbra*{\frac n {b\rbra*{n}}} = 2^{d \sqrt{\log n - d \sqrt{\log n}}},
    \]
    and the limit
    \[
         \lim_{n \to \infty} \frac{2\sqrt{b\rbra*{n}}}{b\rbra*{n}} = 0.
    \]
    With these, the left hand side of Eq. (\ref{eq:recurrence-function}) becomes
    \begin{align*}
        \lim_{n \to \infty} \frac{ 2^{d \sqrt{\log n - d \sqrt{\log n}}} }{ 2^{d \sqrt{\log n} } }
        & = \lim_{n \to \infty} 2^{d \rbra*{ \sqrt{\log n - d \sqrt{\log n}} - \sqrt{\log n} }} \\
        & = \lim_{n \to \infty} 2^{d \frac{ -d \sqrt{\log n} }{ \sqrt{\log n - d \sqrt{\log n}} + \sqrt{\log n} } } \\
        & = \lim_{n \to \infty} 2^{ \frac {-d^2} {1 + \sqrt{1 - \frac{d}{\sqrt{\log n}}}} } \\
        & = 2^{- \frac{d^2} {2}} \\
        & = \frac 1 {c^2} \\
        & < \frac 1 c.
    \end{align*}
    Therefore, the solution to the recurrence is
    \[
        T\rbra*{n} \leq \sqrt{n} b\rbra*{n} = \sqrt{n} 2^{O\rbra*{\sqrt{\log n}}}.
    \]
\end{proof}

Compared to the approach in \cite{AJ22}, our speedup mainly comes from the appropriate use of the quantum minimum finding on bounded-error oracles (see Lemma \ref{lemma:qmin-bounded-error}). It can be seen that the quantum minimum finding in Lemma \ref{lemma:qmin-bounded-error} with query complexity $O\rbra*{\sqrt{n}}$ only achieves a logarithmic improvement over the na{\"{i}}ve method by error reduction with query complexity $O\rbra*{\sqrt{n}\log n}$. Surprisingly, it turns out that such $\poly\rbra*{\log n} = \widetilde O\rbra*{1}$ improvement on the frequently used subroutine can bring a $2^{\rbra*{\log n}^{\varepsilon}} = \omega\rbra*{\poly\rbra*{\log n}}$ speedup beyond polylogarithmic factors.

\section*{Acknowledgements}

The author would like to thank Ansis Rosmanis for pointing out an error in an earlier version of this paper, and thank Fran\c{c}ois Le Gall for helpful discussions. 

The work of Qisheng Wang was supported in part by the Engineering and Physical Sciences Research Council under Grant \mbox{EP/X026167/1} and in part by the MEXT Quantum Leap Flagship Program (MEXT Q-LEAP) under Grant \mbox{JPMXS0120319794}. 

\addcontentsline{toc}{section}{References}
\bibliographystyle{unsrturl}
\bibliography{main}

\end{document}